\renewcommand{\baselinestretch}{1.0}
\def\inline#1:{\par\vskip 7pt\noindent{\bf #1:}\hskip 10pt}
\newtheorem{theorem}{Theorem}[section]
\newtheorem{assert}{Assertion}
\newtheorem{corollary}[theorem]{Corollary}
\newcommand{\qed}{\hfill $\Box$ \medbreak}
\newenvironment{proof}{\noindent {\bf Proof.}}{\qed}
\newcommand{\uniform}{\mbox{\rm uniform}}
\newcommand{\kk}{\mbox{\rm k}}
\def\cT{{\cal T}}
\def\cA{{\cal A}}
\newenvironment{smallitemize} {
  \begin{list}{$\bullet$} {\setlength{\parsep}{0pt}
\setlength{\itemsep}{0pt}} } { \end{list} }
\long\def\jump#1\finjump{}
\title{Collaborative Search on the Plane without Communication}
\author{
Ofer Feinerman\thanks{
The Louis and Ida Rich Career Development Chair, The Weizmann Institute of Science, Rehovot, Israel.
E-mail: {\tt feinermanofer@gmail.com}.
Supported by the Israel Science Foundation (grant 1694/10).}
\and
Amos Korman\thanks{
CNRS and University Paris Diderot, Paris, France.
E-mail: {\tt \,amos.korman@liafa.jussieu.fr}.}
\and Zvi Lotker \thanks{Ben-Gurion University of the Negev, Beer-Sheva, Israel.
E-mail: {\tt \,zvilo@bgu.ac.il}.}
\and  Jean-S\'ebastien Sereni\thanks{CNRS and University Paris Diderot, Paris, France
   and  Charles University, Prague, Czech Republic.
 E-mail: {\tt \,sereni@kam.mff.cuni.cz}}
}
\date{}
\begin{document}
\begin{titlepage}
\def\thepage{}
\maketitle

\begin{abstract}
We generalize the classical cow-path problem \cite{BCR91,DFG2006,KRT96,KSW86} into a question that is relevant for collective foraging in animal groups.
Specifically, we consider a setting in which $k$ identical (probabilistic) agents, initially placed at some central location,
collectively search for a treasure in the two-dimensional plane. The treasure is placed at a target location
by an adversary and the goal is to find it as fast as possible as a function of both $k$ and $D$, where $D$ is the distance
between the central location and the target. This is biologically motivated by cooperative, central place foraging such as performed by ants around their nest. In this type of search there is a strong preference to locate nearby food sources before those that are further away.
Our focus is on trying to find what can be achieved if  communication is limited or altogether absent. Indeed, to avoid
overlaps agents must be highly dispersed making communication difficult. Furthermore, if agents do not commence the search  in synchrony then even initial communication is problematic. This holds, in particular, with respect to the question of whether the agents can communicate and conclude their total number, $k$. It turns out that the  knowledge of $k$ by the individual agents is crucial for performance. Indeed, it is a straightforward observation that the time required for finding the treasure is
$\Omega(D+ D^2/k)$, and we show in this paper that this bound can be matched if the agents have knowledge of $k$ up to some constant approximation.

We present an almost tight bound for the competitive penalty that must be paid, in the running time, if agents have no information about $k$. Specifically, on the negative side, we show that   in such a case, there is no algorithm whose competitiveness is $O(\log k)$. On the other hand, we show that for every constant $\epsilon> 0$, there exists a rather simple uniform search algorithm which is $O( \log^{1+\epsilon} k)$-competitive. 
In addition, we give a lower bound for the setting in which agents are given some estimation of $k$. As a special case, this lower bound implies that for any constant $\epsilon>0$, if each agent  is  given a (one-sided)  $k^\epsilon$-approximation to $k$, then the competitiveness is 
$\Omega(\log k)$.
Informally, our results imply that the agents can potentially perform well without any knowledge of their total number $k$, however, to further improve, they must be given a relatively good approximation of $k$. 
Finally, we propose a uniform algorithm that is both efficient and extremely simple  suggesting its relevance for actual biological scenarios.

\paragraph*{\bf Keywords:} search algorithms; mobile robots; speed-up; cow-path problem; online algorithms; uniform algorithms; social insects; collective foraging.
\end{abstract}

\end{titlepage}

\section{Introduction}\label{sec:introduction}
\paragraph{Background and Motivation:}
The universality of search behaviors is reflected in multitudes of studies in different fields including control systems, distributed computing and biology. We use tools from distributed computing to study a biologically inspired scenario in which a group of agents, initially located at one central location, cooperatively search for treasures in the plane. The goal of the search is to locate nearby treasures as fast as possible and at a rate that scales well with the number of participating agents.

A variety of animal species search for food around a central location that serves the search's initial point, final destination or both \cite{OP79}. This central location could be a food storage area, a nest where offspring are reared or simply a sheltered or familiar environment. Central place foraging holds a strong preference to locating nearby food sources before those that are further away. Possible reasons for that are, for example: (1) decreasing predation risk \cite{K80}, (2) increasing the rate of food collection once a large quantity of food is found \cite{OP79}, (3) holding a territory without the need to reclaim it  \cite{GKDL94,K80,MR79}, and (4) the ease of navigating back after collecting the food using familiar landmarks \cite{CDGW92}.

Searching in groups can increase foraging efficiency  \cite{HW90}. In some extreme cases, food is so scarce that group searching is believed to be required for survival  \cite{CCG99,JS98}. Proximity of the food source to the central location is again important in this case. For example, in the case of recruitment, a nearby food source would be beneficial not only to the individual that located the source but also increase the subsequential retrieval rate for many other collaborators  \cite{T77}. Foraging in groups can also facilitate the defense of larger territories~\cite{S71}. Eusocial insects (e.g., bees and ants) engage in highly cooperative foraging, this can be expected as these insects reduce competition between individuals to a minimum and share any food that is found. Social insects often live in a single nest or hive which naturally makes their foraging patterns central.

Little is known about the communication between the foragers, but it is believed that in some scenarios
  communication may become impractical~\cite{HM85}. This holds, for example, if the foragers start the search at different times and remain far apart (which may be necessary to avoid unnecessary overlaps). Hence, the question of how efficient can the search be if the communication is limited, or altogether absent, is of great importance.

In this paper, we theoretically address general questions of collective searches in the particular natural setting described above.
More precisely, our setting consists of  $k$ identical (probabilistic) agents, initially placed at some central location,
which collectively search for a treasure in the two-dimensional plane. The treasure is placed by an adversary at some target location at a distance $D$ from the central location, where $D$ is unknown to the agents.
 The goal of the agents is to find the treasure as fast as possible, where the time complexity is evaluated as a function of both $k$ and $D$.

In the context of search algorithms, evaluating the time as a function of $D$ was first introduced in the classical paper \cite{BCR91} by Baeza-Yates et al., which studied the {\em cow-path} problem (studied also in \cite{DFG2006,KRT96,KSW86}). Our setting generalizes the one used for the cow-path problem as we consider multiple identical agents instead of a single agent  (a cow in their terminology).
Indeed,  in this distributed setting, we are concerned with the
 {\em speed-up} measure (see also, \cite{AB96, AB97,FDPS11,FGKP04}), which aims to capture  the impact of using $k$ searchers in comparison to using a single one. Note that the objectives of quickly finding nearby treasures and having significant speed-up may be at conflict. That is, in order to ensure that nearby treasures are quickly found, a large enough fraction of the search force must be deployed near the central location. In turn, this crowding can potentially lead to overlapping searches that decrease individual efficiency.

It is a rather straightforward observation that the time required for finding the treasure is
$\Omega(D+ D^2/k)$.
Our focus is on the question of how  agents can approach this bound if their communication  is limited or even completely absent. 
In particular, as information of foraging group size may not be available to the individual searchers, we concentrate our attention on the question of how important it is for agents to know (or estimate) of their total number.
 As we later show, the lack of such knowledge may have a non-negligible impact on the  performance.

 \paragraph{Our Results:}
 We first show that if the agents have a constant approximation of their total number $k$ then there exists  a rather simple search algorithm 
 whose expected running time is $O(D+D^2/k)$, making it 
 $O(1)$-competitive.
We then turn our attention to  \emph{uniform} searching algorithms, in which agents are not assumed to have any  information regarding $k$.
We prove  that the speed-up penalty for using uniform algorithms is slightly more than logarithmic in the number of agents. Specifically,
we show that, for every constant $\epsilon> 0$,  there exists a uniform search algorithm that is $O(\log^{1+\epsilon} k)$-competitive.
 On the other hand, we show that there is no uniform search algorithm that is $O(\log k)$-competitive. 
In addition, we give a lower bound for the intermediate  setting in which agents are given some estimation of $k$. As a special case, this lower bound implies that for any constant $\epsilon>0$, if each agent  is  given a (one-sided)  $k^\epsilon$-approximation to $k$, then the competitiveness is 
$\Omega(\log k)$.
Informally, our results imply that the agents can potentially perform well without any knowledge of their total number $k$, however, to further improve they must be given a relatively good approximation of $k$.
 Finally, we propose a uniform search algorithm that is concurrently efficient and extremely simple which may imply some  relevance for actual biological scenarios.

\paragraph{Related Work:}
Collective search is a classical problem that has been extensively studied in different fields of science.  Group living and food sources that have to be actively sought after make collective foraging a widespread biological phenomenon.  Social foraging theory \cite{GC00} makes use of economic and game theory to optimize food exploitation as a function of group size and degree of cooperativity between agents  in different environmental settings. Social foraging theory has been extensively compared to experimental data (see, e.g., \cite{AMFL10,GG88}) but does not typically account for the spatial characteristics of resource abundance. Central place foraging theory \cite{OP79}  assumes a situation in which food is collected from a patchy resource and is returned to a particular location such as a nest. This theory is used to calculate optimal durations for exploiting food patches at different distances from  the central location and has also been tested against experimental observations \cite{GKDL94,HO87}. Collective foraging around a central location is particularly interesting in the case of social insects where large groups forage cooperatively with, practically, no competition between individuals.  Harkness and Mardouras \cite{HM85} have conducted a joint  experimental and modeling research into the collective search behavior of non-communicating desert ants. Modeling the ants' trajectories using biased random walks, they reproduce some of the experimental findings and demonstrate significant speed-up with group size. In bold contrast to these random walks, Reynolds \cite{R06}  argues that L\'evy flights with a power law that approaches unity is the optimal search strategy for cooperative foragers as traveling in straight lines tends to decrease overlaps between searchers.

From an engineering perspective, 
the distributed cooperation of a team of autonomous agents (often referred to as robots or UAVs - Unmanned Aerial Vehicles) is a problem that has been extensively studied. These models extend single agent searches in which an agent  with limited sensing abilities attempts to locate one or several mobile or immobile targets \cite{PYP01}. The memory and computational capacities of the agent are typically large and many algorithms rely on the construction of cognitive maps of the search area that includes current estimates that the target resides in each point \cite{YMP04}. The agent then plans an optimal path within this map with the intent, for example, of optimizing the rate of uncertainty decrease \cite{KBG06}. Cooperative searches typically include communication between the agents that can be transmitted up to a given distance, or even without any restriction.   Models have been suggested where agents can communicate by
altering the environment to which other agent then react \cite{WAYB08}.  Cooperation without
communication has also been explored to some extent \cite{RCA92} but  the analysis puts no emphasis on
the speed-up of the search process. In addition, to the best of our knowledge, no works exist in this context that put emphasis  on finding nearby targets faster than faraway one. Similar problems studied in this context are pattern formation \cite{DFSY10,SY99}, 
rendezvous \cite{AP04,FPSW05}, and flocking \cite{GP03}. It is important to
stress, that in all those engineering works, the issue of whether
robots know  their total number is typically not addressed, as obtaining such
information does not seem to be problematic. Furthermore, in many 
works, robots are not identical and have unique identities.

In the theory of computer science,
the exploration of graphs using mobile agents is a central question.
Most of the research for graph exploration is concerned with the case
of a single deterministic  agent exploring a finite graph (typically, with some restrictions on the resources
of the agent and/or on the graph structure). For example, in \cite{AH00, BFRSV, DP99, FG05} the agent explores strongly connected
directed finite graphs,  and in \cite{DP02, DFKP02, DKK01, GPRZ07, PP99, R08} the agent explores undirected finite graphs.
 When it comes to probabilistic searching, the random
walk is a natural candidate, as  it is extremely simple, uses no memory, and
trivially self-stabilizes. Unfortunately, however, the random walk turns out to be inefficient
in a two-dimensional infinite grid. Specifically, in this case, the expected hitting time is infinite,
even if the treasure is nearby.

%


 Evaluating the time to find the treasure as a function of $D$, the initial distance to the treasure, was studied in the context of the cow-path problem.
One of the first papers that studied the cow-path problem is the paper  by Baeza-Yates et al. \cite{BCR91}, which shows that the competitive ratio for deterministically finding~a~point on the real line is nine. Considering the two-dimensional case, Baeza-Yates et al.  prove that the spiral search algorithm is optimal up to lower order terms. Randomized algorithms for the problem were studied by  Kao et al. \cite{KRT96},
for the infinite star topology.  Karp et al.  \cite{KSW86} studied an early variant of the cow-path problem on a binary tree. Recently,  Demaine et al. \cite{DFG2006} has considered  the cow-path problem with  a double component price: the first is distance and the second is turn cost. In \cite{LS01}, L\'opez-Ortiz and  Sweet extended the cow-path problem by considering $k$ agents. However, in contrast to our setting, the agents they consider are not identical, and the goal is achieved by (centrally) designing a different specific path for each of the $k$ agents.

In general, the more complex setting on using multiple identical agents has received much less attention.
Exploration by deterministic multiple agents was studied in, e.g.,~\cite{AB96, AB97,FDPS11,FGKP04}.
To obtain better results when using several identical deterministic agents, one must assume that the agents are either centrally coordinated or
that they have some means of  communication (either explicitly, or implicitly, by being able to detect the presence of nearby agents).
%
When it comes to probabilistic agents, 
analyzing the speed-up measure for  $k$-random walkers
has recently gained attention.
In a series of papers, initiated by Alon et al.\cite{AAKKLT}, a speed-up of $\Omega(k)$ is established for various finite graph families,
including, in particular, expenders and random graphs~\cite{AAKKLT, ES11, CFR09}.
While some graph families enjoy linear speed-up, for many
graph classes, to obtain linear speed-up, $k$ has to be quite small.
In particular, this is true for the two-dimensional $n$-node grid, where a linear speed up is obtained when $k<O(\log^{1-\epsilon} n)$.
On the other hand, the cover time of $2$-dimensional $n$-node grid
is always $\Omega(n/ \log k)$, regardless of $k$. Hence, when $k$ is polynomial in $n$, the speed up is only logarithmic in~$k$.
The situation with infinite grids is even  worse.
Specifically, though the $k$-random walkers would find the treasure with probability one, the expected time to find the treasure becomes infinite.

The question of how important it is for individual processors to know their total number has recently been addressed in the context of locality.  Generally speaking, it has been observed that for several 
classical local computation tasks, knowing the number of processors is not essential~\cite{KSV11}. On the other hand, in the context of   local decision, some evidence exist that such knowledge may be crucial for non-deterministic distributed decision~\cite{FKP11}.

%




\section{Preliminaries}\label{sec:preliminaries}
We consider the problem where $k$ mobile {\em agents} (robots) are searching for a {\em treasure} on the two-dimensional plane. 
Each agent has a bounded field of view of say $\varepsilon>0$, hence, for simplicity, we can assume that the agents  are actually walking on the integer two-dimensional infinite grid $G=\mathbb{Z}^2$. All $k$ agents starts the search from a central node $s\in G$, called the {\em source}.
An adversary locates the treasure  at some node $\tau\in G$, referred to as the {\em target} node; the agents have no a priori information about the location of $\tau$. The goal of the agents it to {\em find} the treasure-- this task is accomplished once at least one of the agents visits the node~$\tau$. 

The agents are probabilistic machines that can move on the grid, but cannot communicate between themselves. 
All $k$ agents are identical (execute the same protocol).  An agent can traverse an edge of the grid in both directions.
We do not restrict the internal storage  and computational power of agents, nevertheless, we note that all our upper bounds use simple procedures that can be implemented using relatively short memory. 
 For example, with respect to navigation, our constructions only  assume  the ability to perform four basic procedures, namely: (1) choose a 
direction uniformly at random, (2) walk in a "straight" line to a prescribed distance, (3) perform a {\em spiral search}  around a node\footnote{The spiral search around a node $v$ is a particular deterministic local search algorithm (see, e.g., \cite{BCR91}) that starts at $v$ and enables the agent  to visit all nodes at distance  $\Omega(\sqrt{x})$ from $v$ by traversing $x$ edges, for every integer $x$. For our purposes, since we are concerned only with asymptotic results, we can replace this atomic
navigation procedure with any procedure that guarantees such a property. For simplicity, in the remaining of the paper, we assume that for every integer $x$, the spiral search of length $x$ starting at a node $v$ visits all nodes at distance at most $\sqrt{x}/2$ from $v$}, and (4) return to the source node. On the other hand, for our lower bounds to hold, we do not require any restriction on the navigation capabilities.

Regarding the time complexity, we assume that a traversal of an edge it performed in 1 unit of time. Furthermore, for the simplicity of presentation, we assume that the agents are synchronous, that is, each edge traversal costs precisely 1 unit of time (and all internal computations are performed in zero time). Indeed, this assumption can easily be removed if we measure the time according to the slowest edge-traversal. We also assume that all agents start the search simultaneously at the same time, denoted by $t_0$. This assumption can also be easily removed by starting to count the time after the last agent initiates the search. We measure the cost of an algorithm  by its {\em expected running time},  that is, the expected time (from time $t_0$) until at least one of the agents finds the treasure.
We denote the expected running time of algorithm $\cA$ by $\cT_{\cA}(D,k)$.

The {\em distance} between two nodes  $u,v\in G$, denoted $d(u,v)$, is simply the hop distance between them, i.e., the number of edges on the shortest path connecting $u$ and $v$.  Let $B(r)$ denote the ball centered at the source $s$ with radius $r$, formally, $B(r)=\{v\in G: d(s,v)\leq r\} $.
Denote the distance between the source node $s$ and the target node $\tau$  by $D$, i.e., $D=d(s,\tau)$. Note that if an agent knows $D$, then it can potentially find the treasure in time $O(D)$, by
walking to a distance $D$ in some direction, and then performing a circle around the source of radius $D$ (assuming, of course, that its navigation abilities enable it  to perform such a circle). On the other hand, with the absence of knowledge about $D$, an agent can find the treasure in time $O(D^2)$ by performing a spiral search  around the source. When considering $k$ agents, it is easy to see\footnote{To see why, consider a search algorithm $\cA$ whose expected running time is $T$. Clearly,  $T\geq D$, because it takes $D$ time to merely reach the treasure.
Assume, towards contradiction, that $T<D^2/4k$.  In any execution of $\cA$, by time $2T$, the $k$ agents can visit a total of at most $2Tk<D^2/2$ nodes. Hence,
 by time  $2T$, more than half of the nodes in $B_D:=\{u\mid 1\leq d(u)\leq D\}$
were not visited. Therefore, there must exist a node
$u\in B_D$  such that the probability that $u$ is visited by time $2T$ (by at least one of the agents) is less than $1/2$. If the adversary locates the treasure at $u$ then the expected time to find the treasure is strictly greater than~$T$, which contradicts the assumption.} that the expected running time is $\Omega(D+D^2/k)$, even if the number of agents $k$ is known to all agents, and even if we relax the model and allow agents to freely communicate between each other. It follows from Theorem~\ref{thm:know-k} that if $k$ is known to agents then there exists a search algorithm whose expected running  time is asymptotically optimal, namely, $O(D+D^2/k)$. We evaluate the performance of an algorithm that does not assume the precise knowledge of $k$ with respect to this aforementioned optimal time.
Formally,  let $\phi(k)$ be a function of $k$. An  algorithm $\cA$ is called {\em $\phi(k)$-competitive}
if $$T_{\cA} (D,k)\leq \phi(k)\cdot (D+D^2/k),$$ for every integers $k$ and $D$. We shall be particularly interested in the performances of {\em uniform} algorithms-- these are algorithms in which no information regarding $k$ is available to agents.   (The term uniform is chosen to stress that
agents execute the same algorithm regardless of their number, see, e.g., \cite{KSV11}.)

\section{Upper Bounds}\label{sec:upper}

\subsection{Optimal Running Time with Knowledge on $k$}
Our first theorem asserts that agents can obtain asymptotically optimal running time if they know the precise value of $k$.
As a corollary, it will follow that, in fact, to obtain such a bound, it is sufficient to assume that agents only know a constant approximation of $k$.
Due to lack of space, the proofs of Theorem~\ref{thm:know-k} and Corollary~\ref{cor:known} are deferred to the Appendix. 
\begin{theorem}\label{thm:know-k}
    Assume that the agents know the value of $k$. Then,
    there exists a (non-uniform) search algorithm running in expected time $O(D+D^2/k)$. 
\end{theorem}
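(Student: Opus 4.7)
The plan is to describe an explicit non-uniform algorithm parameterized by $k$, and then analyze it via a geometric-sum argument. Since each agent knows $k$, they can individually choose the right spatial scale at which to concentrate their effort without needing any communication — the coordination happens through the matched probability distributions they sample from.

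I would have every agent execute an infinite sequence of \emph{phases} indexed by $j=1,2,3,\dots$. In phase $j$ the agent (i) picks a node $u_j$ uniformly at random in $B(2^j)$, (ii) walks in a straight line to $u_j$, (iii) performs a spiral search of length $L_j \coloneqq c\cdot 4^{j}/k$ around $u_j$ (so by the spiral-search property it visits every node of $B(u_j,r_j)$ with $r_j \coloneqq \tfrac{1}{2}\sqrt{L_j} = \Theta(2^j/\sqrt{k}\,)$), and (iv) returns to $s$. Here $c$ is an absolute constant to be chosen large enough. The three travel components cost $O(2^j)$, $O(4^j/k)$ and $O(2^j)$ respectively, so phase $j$ lasts $O(2^j+4^j/k)$ time steps, and the cumulative time through the end of phase $j$ is $T_j = O(2^j+4^j/k)$ by a geometric sum.

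Fix the distance $D$ and let $j_D$ be the unique integer with $2^{j_D-1}<D\leq 2^{j_D}$. For any $j\geq j_D$, a specific agent covers $\tau$ during phase $j$ iff the random point $u_j$ falls in $B(\tau,r_j)$. Since $\tau\in B(2^j)$ and $r_j\leq 2^j$, a constant fraction of $B(\tau,r_j)$ lies inside $B(2^j)$, so this event has probability $\Theta(r_j^2/4^j)=\Theta(c/k)$. Independence across the $k$ agents then gives that the probability that \emph{no} agent covers $\tau$ in phase $j$ is at most $(1-c'/k)^k\leq e^{-c'}$; call this failure probability $1-p$, and choose $c$ so that $p>3/4$.

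Finally I would bound the expected running time by
\[
   \mathbf{E}[T] \;\leq\; \sum_{j\geq j_D} T_j\cdot (1-p)^{\,j-j_D}
   \;=\; \sum_{j\geq j_D} O\!\left(2^j+\tfrac{4^j}{k}\right)(1-p)^{j-j_D}.
\]
Because $p>3/4$, both geometric series $\sum_i (2(1-p))^i$ and $\sum_i (4(1-p))^i$ converge, yielding $\mathbf{E}[T]=O(2^{j_D}+4^{j_D}/k)=O(D+D^2/k)$, which matches the lower bound.

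The main delicate point, and the one I expect to need care, is exactly this convergence step: because the phase lengths grow like $4^j/k$, a naive constant success probability per phase (say $p=1/2$) would give a \emph{divergent} geometric sum. The fix — and really the only nontrivial design choice — is to make the spiral length long enough (the constant $c$ large enough) that $4(1-p)<1$, so that the exponential shrinkage of the tail probability strictly beats the exponential growth of the phase cost. Everything else (the $\Omega(1/k)$ per-agent hit probability, the $O(2^j)$ travel cost, the handling of $\tau$ near the boundary of $B(2^j)$) is routine once this is set up. The corollary about a constant-factor approximation of $k$ is immediate: replacing $k$ by $\Theta(k)$ in the algorithm only changes the constants $c$ and $c'$ above.
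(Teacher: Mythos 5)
Your proposal is correct, but it reaches convergence by a different mechanism than the paper. Structurally the algorithms share the same building blocks (jump to a uniform node in a ball of doubling radius, spiral search of length $\Theta(4^j/k)$, return), but the paper's Algorithm $\cA_{\kk}$ uses a \emph{double} loop: in stage $j$ it re-runs all phases $1,\dots,j$, so that by the time the elapsed time has grown by a factor $4^{\ell}$ beyond the scale-$D$ cost, each agent has executed $\Omega(\ell^2)$ useful phases, and the failure probability decays like $\gamma^{-\ell^2/2}$, which beats the exponential growth of the stage durations with no tuning of constants (the per-phase success probability across the $k$ agents is just some fixed constant $1-e^{-\beta}$, possibly small). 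You instead keep a single loop and boost the per-phase success probability above $3/4$ by inflating the spiral length by a large absolute constant $c$, so that $4(1-p)<1$ and the plain geometric series converges; you correctly identify this as the one non-routine design choice. Your route gives a slightly simpler algorithm and analysis, at the price of choosing $c$ against the hidden geometric constants; one small point you should make explicit is the regime of very small $k$ (say $k=O(c)$), where the per-agent hit probability $\Theta(c/k)$ must be capped at $1$ and the bound $(1-c'/k)^k\le e^{-c'}$ is not literally applicable --- there the argument still goes through because $r_j\ge 2\cdot 2^j$ makes the spiral cover all of $B(2^j)$, so the phase succeeds with probability $1$ (or, in the intermediate range, with probability $\Omega(1)$ bounded away from $1/4$ for $c$ large). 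It is also worth noting that the paper's repetition trick is the one that generalizes: in the uniform setting of Theorem~\ref{thm:comp} the agents cannot tune the per-phase success probability because they do not know $k$, so the quadratic (there, triple-loop) repetition is essential, whereas your constant-boosting trick is specific to the known-$k$ case.
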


Fix a constant $\rho\ge1$. We say that the agents have a $\rho$-approximation of $k$, if, initially, each agent $a$ receives as input a value $k_a$ satisfying 
$k/\rho\le k_a\le k\rho$.

\begin{corollary}\label{cor:known}
Fix a constant $\rho\ge1$.
    Assume that the agents have a $\rho$-approximation of $k$. Then,
    there exists a (non-uniform) search algorithm which is $O(1)$-competitive.
\end{corollary}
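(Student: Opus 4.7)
My plan is to reduce directly to Theorem~\ref{thm:know-k} by having each agent $a$ execute the algorithm $\cA_{\kk}$ using its private estimate $k_a$ in place of $k$. Because $k/\rho \le k_a \le k\rho$ and $\rho$ is an absolute constant, I expect every quantity appearing in the analysis of $\cA_{\kk}$---the spiral lengths $t_i$, the per-phase success probabilities, and the cumulative phase times---to be perturbed by at most a multiplicative factor that depends only on $\rho$. The resulting expected running time should therefore still be $O(D+D^2/k)$, with the implicit constant absorbing a factor in $\rho$.

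Concretely, I would rerun the proof of Theorem~\ref{thm:know-k} line by line, tracking the constants. Agent $a$ performing phase $i$ now searches for time $t_i^{(a)}\coloneqq 2^{2i+2}/k_a$, which lies in $[2^{2i+2}/(k\rho),\,\rho\cdot 2^{2i+2}/k]$. For $i\ge\log D$, the same ball-containment argument shows that a constant fraction of $B(\tau,\sqrt{t_i^{(a)}}/2)$ still lies in $B_i$, so the per-phase success probability is $\Omega(|B(\tau,\sqrt{t_i^{(a)}}/2)|/|B_i|)=\Omega(1/k)$, where the hidden constant depends on $\rho$. One execution of phase $i$ costs $O(2^i+2^{2i}/k_a)=O(2^i+2^{2i}/k)$ time. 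Setting $s\coloneqq\lceil\log D\rceil$, by time $\hat T(\ell)=O(2^{s+\ell}+2^{2(s+\ell)}/k)$ each agent has completed at least $\ell^2/2$ phases $i$ with $i\ge s$; independence across agents then gives a failure probability at most $(1-\Omega(1/k))^{k\ell^2/2}\le\gamma^{-\ell^2/2}$ for some constant $\gamma>1$, and summing the resulting series as in Theorem~\ref{thm:know-k} yields expected running time $O(D+D^2/k)$.

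The one place where I anticipate having to be careful is that agents no longer march in lockstep: since their $k_a$ differ, phase $i$ of agent $a$ and phase $i$ of agent $b$ occupy slightly different absolute time intervals. Fortunately, the entire argument depends only on how many large-index phases each individual agent has completed by a given wall-clock time, and on the independence of the agents' random node choices---neither of which is disturbed by this mild desynchronization. I therefore expect the proof to be essentially a routine bookkeeping exercise once the constants are threaded through the argument of Theorem~\ref{thm:know-k}.
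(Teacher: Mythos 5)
Your proposal is correct and follows essentially the same route as the paper: run $\cA_{\kk}$ with the agent's estimate and rerun the analysis of Theorem~\ref{thm:know-k}, absorbing constant factors depending only on $\rho$ into the per-phase success probability and phase durations (your remark on desynchronization is valid, since the argument only needs each agent's count of completed large-index phases by a given time). The only cosmetic difference is that the paper has each agent use $k_a/\rho$, which underestimates $k$ and so leaves the success-probability bound untouched while inflating the running time by at most $\rho^2$, whereas you use $k_a$ directly and track two-sided constants; both work.
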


\subsection{Unknown Number of Agents}
%
%
We now turn our attention to the case of uniform algorithms.
\begin{theorem}
For every positive constant $\varepsilon$, there exists a uniform search algorithm that is $O(\log^{1+\varepsilon}k)$-competitive.
\end{theorem}
\begin{proof}
Consider the uniform search algorithm $\cA_{\uniform}$ described  below. Let us analyze the performances of the algorithm, and show that its expected running time is $T(D,k):=\phi(k)\cdot(D+D^2/k)$, where $\phi(k)=O(\log^{1+\varepsilon}k)$.
We first note that it suffices to prove the statement
when $k\le D$.
Indeed, if $k>D$, then we may consider only $D$ agents among
the $k$ agents and obtain an upper bound on the running time of $T(D,D)$,
which is less than $T(D,k)$.
\renewcommand{\baselinestretch}{1.1}
\begin{algorithm}
\Begin{
Each agent performs the following\;
\For(the big-stage $\ell$){$\ell$ from $0$ to $\infty$}{%
\For(the stage $i$){$i$ from $0$ to $\ell$}{%
\For(the phase $j$){$j$ from $0$ to $i$}{%
 \begin{smallitemize}
 \item
 Set $k_j\longleftarrow2^j$~~and~~$D_{i,j}\longleftarrow  \sqrt{2^{(i+j)}/j^{(1+\varepsilon)}}$\;
\item 
Go to node $u\in B(D_{i,j})$ chosen uniformly at random among the nodes
in $B(D_{i,j})$\;
\item 
Perform a spiral search starting at $u$ for $t_{i,j}=2^{i+2}/j^{1+\varepsilon}$ time\;
\item 
Return to the source
\end{smallitemize}
  }
 }
}
}
\caption{The uniform algorithm $\cA_{\uniform}$.}\label{alg:u-alg}
\end{algorithm}
\begin{assert}\label{as:1}
For every integer $\ell$, the time until all agents complete big-stage $\ell$ is $O(2^\ell)$.
\end{assert}
For the assertion to hold, it is sufficient to prove that stage $i$
in big-stage $\ell$ takes $O(2^{i})$ time.
To this end, notice that phase $j$ takes $O(D_{i,j}+2^{i}/j^{1+\varepsilon})\leq O(2^{(i+j)/2}+2^{i}/j^{1+\varepsilon}))$ time.
Therefore, stage $i$ takes time
\[
O\left(\sum_{j=0}^{i}(2^{(i+j)/2}+2^{i}/j^{1+\varepsilon})\right)=O(2^{i}).
\]This establishes Assertion~\ref{as:1}.

Let $s=\lceil\log((D^2\cdot\log^{1+\varepsilon} k)/k)\rceil+1$. Fix an integer
$i\ge s$.
Then, there exists $j\in\{0,\ldots,i\}$ such that $2^{j}\le
k<2^{j+1}$.\\
\begin{assert}\label{as:2}
The probability that none of the agents  finds the treasure while executing  phase $j$ of stage
$i$ is at most $c$, for some constant $c<1$.
\end{assert}
To see this, first note that the treasure is inside the ball $B(D_{i,j})$.
Indeed,
$D_{i,j}=\sqrt{\frac{2^{i+j}}{j^{1+\varepsilon}}}\ge\sqrt{\frac{2^{s+j}}{j^{1+\varepsilon}}}>D$.
Now, observe that the total number of nodes in $B(D_{i,j})$ is $O(D_{i,j}^2)=O(2^{i+j}/j^{1+\varepsilon})$.
Moreover, at least half of the ball of radius $\sqrt{t_{i,j}}$ around the
treasure is contained in $B(D_{i,j})$.
Consequently, the probability for an agent $a$ to choose a node $u$ in
a ball of radius $\sqrt{t_{i,j}}$ around the treasure in phase $j$ of stage
$i$ is
\[
\Omega\left(t_{i,j}/|B(D_{i,j})|\right)=\Omega\left(\frac{2^i/j^{1+\varepsilon}}{2^{i+j}/j^{1+\varepsilon}}\right)=\Omega\left(2^{-j}\right).
\]
If this event happens, then the treasure is found during the corresponding spiral search of agent $a$.
As a result, there exists a positive constant $c'$ such that
the probability that none of the $k$ agents finds the treasure
during phase $j$ of stage $i$ is at most
$(1-c'\cdot 2^{-j})^k\le(1-c'\cdot 2^{-j})^{2^j}\le e^{-c'}$.
This establishes Assertion~\ref{as:2}.

By the time that all agents have completed their respective
big-stage $s+\ell$, all agents have performed
$\Omega(\ell^2)$ stages $i$ with $i\ge s$. By Assertion~\ref{as:2},
for each such $i$, the probability that the treasure is not found
during stage  $i$ is at most $c$ for some constant $c<1$.
Hence, the probability that the treasure is not found
during any of those $\Omega(\ell^2)$ stages is at most $1/d^{\ell^2}$ for some
constant $d>1$.
Assertion~\ref{as:1} ensures that all agents complete big-stage $s+\ell$ by time $O(2^{s+\ell})$,
so the expected running time is
$O(\sum_{\ell=0}^\infty 2^{s+\ell}/d^{\ell^2})=O(2^s)=O(D^2\log^{1+\varepsilon} k/k)$, as desired.
\end{proof}

\section{Lower Bounds}
\subsection{An Almost Tight Lower Bound for Uniform Algorithms}

\begin{theorem}\label{th:lowerbound}
There is no uniform search algorithm that is $O(\log k)$-competitive.
\end{theorem}
\begin{proof}
Suppose that there exists a uniform search algorithm with running time less
than $f(D,k)=(D+D^2/k)\Phi'(k)$. Hence, as long as $k\leq D$, we have
$f(D,k)\leq \frac{D^2\Phi(k)}{k}$, where $\phi(k)=2\phi'(k)$. Assume
towards a contradiction that $\phi(k)=O(\log k)$.

Let $T$ be a (sufficiently large) integer, and let $D=2T+1$.
That is, for the purpose of the proof, we assume that the treasure is actually placed at some far away distance $D=2T+1$. This means, in particular,
that by time $2T$ the treasure has not been found yet.

For every integer $i\leq \log T/2$, set
\[
k_i=2^i\quad\text{and}\quad D_i=\sqrt{T\cdot k_i/ \phi(k_i)}.
\]

Fix an integer $i$ in $[1,\log T/2]$, and consider  $B(D_i)$, the ball of radius $D_i$ around the source node.
We consider now the case where the algorithm is executed with $k_i$ agents.
For every set $S\subseteq B(D_i)$, let $\chi(S)$ be the random variable
indicating the number of nodes in $S$ that were visited by at least one of the
$k_i$ agents by time $2T$. (For a singleton node $u$, we write $\chi(u)$ for $\chi(\{u\})$.)

Note that $k_i\leq D_i$, and therefore, for each node $u\in B(D_i)$, the expected time to visit $u$ is at most  $f(D_i,k_i)\leq{D_i^2\phi(k_i)}/{k_i}= T$.
Thus, by Markov's inequality, the probability that $u$ is visited by time $2T$ is at least $1/2$,
i.e.,~$\mathbf{Pr}(\chi(u)= 1)\geq 1/2$. Hence, $\mathbf{E}(\chi(u))\geq 1/2$.

Now consider an integer $i$ in $[2,\log T/2]$, and set $S_i=B(D_i)\setminus B(D_{i-1})$.
By linearity of expectation, $\mathbf{E}(\chi(S_i))=\sum_{u\in S_i} \mathbf{E}(\chi(u))\geq |S_i|/2$.
Consequently, by time $2T$, the expected number of nodes in $S_i$ that an agent visits is
\[
\Omega(|S_i|/k_i)=\Omega\left(\frac{D_{i-1}(D_i-D_{i-1})}{k_i}\right)=\Omega\left(\frac{T}{\phi(k_{i-1})}\cdot \left(  \frac{\sqrt{2\phi(k_{i-1})}}{\sqrt{\phi(k_{i})}} - 1 \right)\right)=\Omega \left(\frac{T}{\phi(k_i)}\right),
\]
where the second equality follows from the fact that $D_i=D_{i-1}\cdot
\sqrt{\frac{2\phi(k_{i-1})}{\phi(k_{i})}}$, and the third equality follows from the fact that $\phi(k_i)=O(i)$.
In other words, for every integer $i$ in $[2,\log T/2]$, the expected number of
nodes in $S_i$ that each agent visits by time $2T$ is $\Omega\left(\frac{T}{\phi(k_i)}\right)$.
Since the sets $S_i$ are pairwise disjoint, the linearity of expectation
implies that the expected number of nodes that an agent visits by time $2T$ is
\[
\Omega\left(\sum_{i=2}^{\log
T/2}\frac{T}{\phi(k_i)} \right)=T\cdot \Omega\left(\sum_{i=2}^{\log
T/2}\frac{1}{\phi(2^i)} \right).
\]
Hence, the sum $\sum_{i=2}^{\log T/2}\frac{1}{\phi(2^i)}$ must converge as $T$ goes to infinity.
This contradicts the assumption that $\phi(k)=O(\log k)$.
\end{proof}

\subsection{A Lower Bound for Algorithms using Approximated Knowledge of $k$}
We now present a lower bound for the competitiveness of search algorithms assuming that agents have approximations for $k$. 
As a special case, our lower bound implies that for any constant $\epsilon>0$, if agents are given an estimation $\tilde{k}$ such that 
$\tilde{k}^{1-\epsilon}\leq k\leq \tilde{k}$, then the competitiveness is $\Omega(\log k)$. That is, the competitiveness remains logarithmic even for relatively good approximations of $k$.

Formally, let $\epsilon(x)$ be some function such that $0<\epsilon(x)\leq 1$, for every integer $x$. We say that the agents have $k^\epsilon$-approximation of $k$ if each agent $a$ receives as input an estimation $\tilde{k}_a$ for $k$, satisfying: 
$$\tilde{k}_a^{1-\epsilon(\tilde{k}_a)}\leq k\leq \tilde{k}_a.$$
(For example, if $\epsilon(x)$ is the constant $1/2$ function and if the  agents  have $k^\epsilon$-approximation of $k$, then this means, in particular,  
that if all agents receive the same value $\tilde{k}$ then  the real number of agents  $k$ satisfies  $\sqrt{\tilde{k}}\leq k\leq \tilde{k}.$)

\begin{theorem}\label{th:lowerbound}
Let $\epsilon(x)$ be some function such that $0<\epsilon(x)\leq 1$, for every integer $x$. 
Consider the case that the agents have a $k^\epsilon$-approximation of $k$. 
Suppose that there exists  a $\phi(k)$-competitive algorithm, where $\phi$ is non-decreasing. Then, $\phi(k)=\Omega(\epsilon(k)\log k)$.
\end{theorem}
\begin{proof}
Assume  that there is a search algorithm
for this case running in time $(D+D^2/k)\phi(k)$, where  $\phi$ is non-decreasing. 
Suppose that all agents receive the same  value $\tilde{k}$, that should serve as an estimate for $k$. 
Consider a large enough integer $W$, specifically,  such that $4\tilde{k}< W$. 
Set $$T=2W\cdot\phi(\tilde{k})    \mbox{~~~and~~~} j_0=\frac{\log W}{2}.$$ 
For the purposes of the proof, we assume that the treasure is located at distance $D=2T+1$, so that by time $2T+1$ it is guaranteed that no agent finds the treasure. 

 For $i=1,2,\cdots$, define 
 $$
 S_i:=\{u\mid 2^{j_0+i-1}<d(u,s)\leq 2^{j_0+i}\}.
 $$
 Fix an integer $i\in\{\lceil \frac{1-\epsilon(\tilde{k})}{2}\log \tilde{k}\rceil, \cdots,\lfloor\frac{1}{2}\log \tilde{k}\rfloor\}.$ Assume for the time being, that the number of agents is $k_i=2^{2i}$. Note that $$ \tilde{k}^{1-\epsilon(\tilde{k})}\leq k_i \leq \tilde{k}, $$
 hence, $k_i$ is a possible candidate for being the real number of agents.  Observe that all nodes in $S_i$ are at distance at most $2^{j_0+i}$ from the source, and that $|S_i|=\Theta(2^{2j_0+2i})=\Theta(W\cdot k_i)$. 
By definition, it follows that  $j_0\geq i+1$. Hence,
  $k_i\leq 2^{j_0+i-1} < d(u,s)$, and therefore 
 it follows   by the required expected time of the algorithm, that for each node $u\in S_i$, the expected time to cover by at least one of the $k_i$  agents  is at most 
 $$
 \frac{2d(u,s)^2}{k_i} \cdot\phi(k_i)~~\leq~~ 2W\cdot\phi(\tilde{k})=T.
 $$
 
Recall that we now  consider the case where the algorithm is executed with $k_i$ agents.
For every set of nodes $S\subseteq G$, let $\chi(S)$ be the random variable
indicating the number of nodes in $S$ that were visited by at least one of the
$k_i$ agents by time $2T$. (For a singleton node $u$, we write $\chi(u)$ for $\chi(\{u\})$.)
By Markov's inequality, the probability that $u$ is visited by at least one of the $k_i$ agents by time $2T$ is at least $1/2$,
i.e.,~$\mathbf{Pr}(\chi(u)= 1)\geq 1/2$. Hence, $\mathbf{E}(\chi(u))\geq 1/2$.
 By linearity of expectation, $\mathbf{E}(\chi(S_i))=\sum_{u\in S_i} \mathbf{E}(\chi(u))\geq |S_i|/2$.
Consequently, by time $2T$, the expected number of nodes in $S_i$ that a single agent visits is $\Omega(|S_i|/k_i)= \Omega(W)$.

 Since this holds for any $i\in\{\lceil \frac{1-\epsilon(\tilde{k})}{2}\log \tilde{k}\rceil, \cdots,\lfloor\frac{1}{2}\log \tilde{k}\rfloor\}$, and 
since the $S_i$'s are pairwise disjoint sets, it follows by linearity of expectation, that the expected number of nodes  that a single agent visits
by time $2T$ is  $\Omega(W\cdot \epsilon(\tilde{k})\log \tilde{k})$. Since $T=2W\cdot\phi(\tilde{k})$, this implies that 
 $\phi(\tilde{k})=\Omega(\epsilon(\tilde{k}) \log \tilde{k})$, as desired. This concludes the proof of the theorem.
\end{proof}

\section{The Harmonic Search Algorithm}
The algorithms described in the Section~\ref{sec:upper} are relatively simple but still require the use of
non trivial iterations, which may be complex for simple and tiny agents such as ants.
If we relax the requirement of bounding the expected running time and demand only that the treasure
be found with some low constant probability, then it is possible to avoid one
of the loops of the algorithms. However, a sequence of iterations still needs
to be performed.

In this section, we propose an extremely simple algorithm, coined
the \emph{harmonic search algorithm}\footnote{The name harmonic was chosen
because of structure resemblances to the celebrated harmonic algorithm for the
$k$-server problem --- see, e.g.,~\cite{BG91}.}, which does not perform in iterations
and is essentially composed of three components:
(1) choose a random direction and walk in this direction for a distance $d$,
chosen randomly according to a distribution in which the probability of
choosing $d$  is roughly inverse proportional to $d$,
(2) perform a local search (e.g., a spiral search) for roughly $d^2$ time, and
(3) return to the source.
It turns out that this extremely simple algorithm has a good probability of quickly finding the treasure,
if the number of agents is sufficiently large.

More specifically, the algorithm depends on a positive constant parameter $\delta$ that is fixed in
advance and governs the performance of the algorithm.
For a node $u$, let $p(u)=\frac{c}{d(u)^{2+ \delta}}$, where $c$ is
the normalizing factor, defined so that
$\sum_{u\in V(G)} p(u)=1$. (Note that $c$ depends on $\delta$.)

\renewcommand{\baselinestretch}{1.2}
\begin{algorithm}
\Begin{
Each agent performs the following three actions\;
1. Go to a node $u\in V(G)$ with probability $p(u)$\;
2. Perform a spiral search for $t(u)=d(u)^{2+\delta}$ time\;
3. Return to the source
}
\caption{The harmonic search algorithm.}\label{alg:h-alg}
\end{algorithm}

\begin{theorem}
Let $\delta\in(0,0.8]$. For every $\varepsilon>0$,
there exists a positive real number $\alpha$ such that
if $k>\alpha D^{\delta}$, then with probability at least $1-\epsilon$, the expected running time of  the harmonic algorithm is $O(D+\frac{D^{2+\delta}}{k}).$
\end{theorem}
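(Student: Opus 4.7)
The plan is to adapt the spiral-search coverage argument of Theorem~\ref{thm:know-k} to the one-shot setting, where each agent performs a single probabilistic trial drawn from~$p$. Put $T\coloneqq D+D^{2+\delta}/k$, and let $C=C(\varepsilon,\delta)$ be a large constant to be tuned at the end; the goal is to show that with probability at least $1-\varepsilon$ some agent finds $\tau$ within time~$CT$.

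\emph{Normalisation.} Since $|\sst{u}{\dis(u)=r}|=\Theta(r)$, one has $\sum_u\dis(u)^{-(2+\delta)}=\Theta\bigl(\sum_{r\ge 1}r^{-(1+\delta)}\bigr)$, which converges because $\delta>0$; hence $c=c(\delta)$ is a positive constant, and $p(u)=\Theta(\dis(u)^{-(2+\delta)})$.

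\emph{Good starting nodes.} Call $u$ \emph{good} if an agent that samples $u$ in step~1 finds $\tau$ in total time at most $CT$. Using the spiral-search property that a spiral of length $x$ visits every node within distance $\sqrt{x}/2$ of its starting point, the spiral at $u$ catches $\tau$ after at most $4\dis(u,\tau)^2$ steps, provided $\dis(u,\tau)\le\dis(u)^{1+\delta/2}/2$. So it suffices that
\[
\dis(u,\tau)\le\tfrac{1}{2}\dis(u)^{1+\delta/2}\qquad\text{and}\qquad\dis(u)+4\dis(u,\tau)^2\le CT.
\]
Setting $R\coloneqq\sqrt{CT/8}$ and requiring $\alpha\ge\kappa C$ for a suitable constant $\kappa$ makes $R\le D/2$; then every $u\in B(\tau,R)$ has $\dis(u)\in[D/2,3D/2]$, which together with $\delta\le 0.8$ and $D$ large enough yields both inequalities by direct substitution.

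\emph{Hitting a good node with $k$ agents.} Every good $u$ satisfies $p(u)\ge c/(3D/2)^{2+\delta}=\Omega(D^{-(2+\delta)})$, and $|B(\tau,R)|=\Theta(R^2)=\Theta(CT)$. Hence the per-agent probability of sampling a good node is $\Omega(CT/D^{2+\delta})$, which is at least $\Omega(C/k)$ since $T\ge D^{2+\delta}/k$. By independence of the trials,
\[
\Pr[\text{no agent samples a good node}]\le\bigl(1-\Omega(C/k)\bigr)^k\le\exp(-\Omega(C)).
\]
Choosing $C$ (and correspondingly $\alpha$) of order $\log(1/\varepsilon)$ forces this probability to be at most $\varepsilon$; on the complementary event, some agent finds $\tau$ in time $CT=O(D+D^{2+\delta}/k)$, which is the desired conclusion.

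The main obstacle is the geometric step of embedding a whole ball $B(\tau,R)$ inside the good set uniformly across the two regimes $\alpha\le D$ (where $T\asymp D^2/\alpha$) and $\alpha>D$ (where $T\asymp D$); the hypothesis $\delta\le 0.8$ enters exactly here, giving enough slack in the coverage exponent $1+\delta/2$ to absorb $R\le D/2$ without shrinking the ball or refining the argument. Note also that a direct bound on the expected running time is hopeless, because $\sum_u p(u)\dis(u)^{2+\delta}=c\sum_u 1=\infty$; this is why the theorem is phrased as a high-probability statement requiring $k$ to be at least a polynomial in~$D$.
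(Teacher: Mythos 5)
The paper never actually writes out a proof of this theorem (it only remarks that one argues as in Theorems~\ref{thm:know-k} and~\ref{thm:comp}), and your argument is exactly that adaptation: restrict attention to starting nodes in a ball of radius $R=\Theta(\sqrt{CT})$ around $\tau$, use the spiral-search property to guarantee that such a trial finds $\tau$ within the budget, lower-bound the per-agent probability of sampling such a node by $\Omega(CT/D^{2+\delta})=\Omega(C/k)$, and amplify over the $k$ independent agents. Under the natural reading of the statement (with probability at least $1-\varepsilon$ the treasure is found within $O(D+D^{2+\delta}/k)$ steps, the hidden constant being allowed to depend on $\varepsilon$ and $\delta$, as the quantifier order permits), the argument is sound, and your closing observation explaining why an unconditional expectation bound is impossible is also correct and worth keeping.

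Two corrections to the write-up. First, your claim that the hypothesis $\delta\le0.8$ ``enters exactly'' in the geometric step is wrong for your own argument: once $R\le D/2$ and $D\ge 2^{1+2/\delta}$, every $u\in B(\tau,R)$ satisfies $\dis(u)\ge D/2$, hence $\tfrac12\dis(u)^{1+\delta/2}\ge\tfrac12(D/2)^{1+\delta/2}\ge D/2\ge R$, and the time condition $\dis(u)+4\dis(u,\tau)^2\le 3D/2+CT/2\le CT$ only needs $C\ge3$; no upper bound on $\delta$ is used anywhere, so you in fact prove the claim for every $\delta>0$ (which subsumes the theorem, but the explanatory sentence should be deleted rather than left as a false attribution of where the hypothesis is needed). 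Second, your estimates require $D$ to exceed a constant depending on $C=\Theta(\log(1/\varepsilon))$ and on $\delta$; the bounded-$D$ case should be dispatched in one line (each agent samples $\tau$ itself with probability at least $c/D_0^{2+\delta}$ and reaches it in time $D$, so enlarging $\alpha$ makes the failure probability at most $\varepsilon$ there too), and you should state explicitly that the constant in the $O(\cdot)$ depends on $\varepsilon$, since that is precisely what your choice of $C$ entails.
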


\begin{proof}
Fix a real number $\beta$ greater than $\ln(1/\varepsilon)$, so
$e^{-\beta}<\epsilon$. Set $\alpha=12\beta/c$.
We assume that the number of agents $k$ is greater than $\alpha D^{\delta}$ and we show that
with probability at least $1-\epsilon$, the running time of
the harmonic search algorithm is $O(D+\frac{D^{2+\delta}}{k}).$

Let $\lambda = \frac{4\beta D^{1+\delta}}{c k}$. In particular,
$\lambda<D/4$ since $k>\alpha D^{\delta}$.
Consider the ball $B_{\lambda}$ of radius $\sqrt{\lambda D}/2$ around the treasure.
Note that $\sqrt{\lambda D}<D/2$, and hence $3D/4<d(u)<5D/4$ for every node $u\in B_{\lambda}$.
Note also that if $u\in B_{\lambda}$, then an agent that performs a spiral search from $u$
finds the treasure by time $\lambda D$, which is at most $t(u)$ since
$d(u)>3D/4$ and $D\ge2$.

In other words, if an agent goes to a node $u\in {B}_{\lambda}$ in
Step 1 of the algorithm, then this agent finds the treasure in step 2.
Since each node in ${B}_{\lambda}$ is at distance less than $5D/4$ from the
source, it follows that the total running time of the algorithms
is $O(D(5/4+\lambda))$, which is $O(D+\frac{D^{2+\delta}}{k})$.
Hence, let us analyze the probability that at least one of the $k$ agents
goes to a node in ${B}_{\lambda}$ in Step 1 of the algorithm.

Since $d(u)<5D/4$ for each node $u\in {B}_{\lambda}$, the probability $p(u)$ that a single agent
goes to $u$ in Step 1 is at least $\frac{c}{(5D/4)^{2+ \delta}}\geq \frac{c}{2D^{2+ \delta}}$
as $\delta\le0.8$.
Since there are at least $\lambda D/2$ nodes in ${B}_{\lambda}$, the probability that a single agent
goes to a node in ${B}_{\lambda}$ in Step 1 is at least
\[
\sum_{u\in B_{\lambda}}p(u)\ge|B_\lambda|\cdot\frac{c}{2D^{2+ \delta}}\ge
\frac{c\lambda}{4D^{1+\delta}}.
\]
It follows that the probability that no agent goes to a node in $B_{\lambda}$ in Step 1 of the algorithm
is at most
\[
\left(1-\frac{c\lambda}{4D^{1+\delta}}\right)^k=\left(1-\frac{c\lambda}{4D^{1+\delta}}\right)^{\beta\frac{4D^{1+\delta}}{c\lambda}}
\le e^{-\beta}<\varepsilon.
\]
The theorem follows.
\end{proof}

\section{Conclusion and Discussion}
We first presented an algorithm that assumes that agents have a constant approximation of $k$ and runs in optimal $O(D+D^2/k)$ expected time.
We then showed that there exists a uniform search algorithm whose competitiveness  is slightly more than logarithmic, specifically, $O(\log^{1+\epsilon} k)$, for arbitrary small constant $\epsilon>0$.
We also presented a relatively efficient uniform algorithm, namely, the harmonic algorithm, that has extremely simple structure. Our constructions imply that, in the absence of any communication, multiple searchers can still potentially perform rather well.  On the other hand, our lower bound results imply that to achieve better running time,  the searchers must either communicate or utilize some  information regarding $k$. In particular, even if each agent is given a $k^\epsilon$-approximation to $k$ (for constant $\epsilon>0$), it would 
not suffice for being strictly below $O(\log k)$-competitive.

Although the issue of memory
is beyond the scope of this paper, our constructions are simple and can be implemented using relatively low memory. For example, going in a straight line for a distance of $d=2^\ell$ can be implemented using $O(\log\log d)$ memory bits, by employing a randomized counting technique. In addition, our lower bounds result gives evidence that in order to achieve a near-optimal  running time, agents must use non-trivial memory size, required merely to store the necessary approximation of $k$. This may be useful for obtaining a tradeoff between the running time and  the memory size of agents.

From another perspective, it is of course interesting to experimentally verify whether social insects engage in search patterns in the plane which resemble the simple
uniform algorithms specified above, and, in particular, the harmonic algorithm. Two natural candidates are desert ants {\em Cataglyphys} and honeybees {\em Apis mellifera}.
First, these species seem to face settings which are similar to the one we use.
Indeed, they  cannot rely on communication during the search due to the dispersedness of individuals \cite{HM85} and their inability to leave chemical trails (this is due to increased pheromone evaporation in the case of the desert ant). Additionally, the task of finding the treasure is relevant, as food sources in many cases are indeed relatively rare or patchy. Moreover, due to the reasons mentioned in Section~\ref{sec:introduction}, finding nearby sources of food is of great importance. Second, insects of these species have the behavioral and computational capacity to maintain a compass-directed vector flight~\cite{CSOFF00,HM85}, measure distance using an internal odometer~\cite{SW04,SZAT00}, travel to distances taken from a random power law distribution \cite{RSR}, and engage in spiral or quasi-spiral movement patterns \cite{R08,RSMG07, WRSM81}. These are the main ingredients that are needed to perform the algorithms described in this paper. Finally,   the search trajectories of desert ants have been shown to  include two distinguishable sections:  a long straight path in a given direction emanating from the nest and a second more tortuous path within a small confined area~\cite{HM85,WMZ04}.

\clearpage

\clearpage

\pagenumbering{roman}
\appendix

\renewcommand{\theequation}{A-\arabic{equation}}
\setcounter{equation}{0}
\begin{center}
\textbf{\large{APPENDIX}}
\end{center}

\section{Proof of Theorem~\ref{thm:know-k}.}
For an integer $i$, let $B_i:= \{u\,:\, d(u) \leq 2^i\}$. Consider the following algorithm.
\renewcommand{\baselinestretch}{1.2}
\begin{algorithm}
\Begin{
Each agent performs the following double loop\;
\For(the \emph{stage} $j$ defined as follows){$j$ from $1$ to $\infty$}{%
  \For(the \emph{phase} $i$ defined as follows){$i$ from $1$ to $j$}{%
  \begin{smallitemize}
      \item
  Go to a node $u\in B_i$ chosen uniformly at random among the nodes in $B_i$
\item
Perform a spiral search for time $t_i=2^{2i+2}/ k$.
\item
Return to the source $s$
  \end{smallitemize}
  }
 }
}
\caption{The non-uniform algorithm $\cA_{\kk}$.}\label{alg:k-alg}
\end{algorithm}

Fix a positive integer $\ell$ and consider the time $T_\ell$ until each agent completed $\ell$
phases $i$ with $i\geq \log D$.
Each time an agent performs phase $i$, the agent finds the
treasure if the chosen node $u$ belongs to
the ball $B(\tau, \sqrt{t_i}/2)$ around $\tau$, the node holding the treasure.
Note that at least some constant fraction of the ball $B(\tau, \sqrt{t_i}/2)$ is contained in $B_i$.
The probability of choosing a node $u$ in that fraction is thus
$\Omega(|B(\tau,\sqrt{t_i}/2)|/|B_i|)$, which is at least $\beta/k$ for
some positive constant $\beta$.
Thus, the probability that by time $T_\ell$ none of the $k$ agents finds the treasure (while
executing their respective $\ell$ phases $i$) is at most
$(1-\beta/k)^{k\ell}$, which is at most $\gamma^{-\ell}$ for some
constant $\gamma$ greater than $1$.

For an integer $i$, let $\psi(i)$ be the time required to execute a phase $i$.
Note that $\psi(i)=O(2^i+2^{2i}/k)$. Hence, the time until all agents
complete stage $j$ for the first time is
\[\sum_{i=1}^j \psi(i)=O(2^j+\sum_{i=1}^j 2^{2i}/k)=O(2^j+2^{2j}/k).\]
Now fix $s=\lceil\log D\rceil$. It follows that for any integer $\ell$, all agents
complete their respective stages $s+\ell$ by time
$\hat{T}(\ell):=O(2^{s+\ell}+2^{2(s+\ell)}/k)$. Observe that by this time, all agents
have completed at least $\ell^2/2$ phases $i$ with $i\geq s$.
Consequently, the probability that none of the $k$ agents finds the treasure by time
$\hat{T}(\ell)$  is at most
$\gamma^{-\ell^2/2}$. Hence, the expected running time is at most
\[
\cT_{\cA_{\kk}}(D,k)=O\left(\sum_{\ell=1}^\infty
\frac{2^{s+\ell}}{\gamma^{\ell^2/2}}+\frac{2^{2(s+\ell)}}{k\gamma^{\ell^2/2}}\right)=O\left(2^{s}+2^{2s}/k\right)=O(D+D^2/k).\]
This establishes the theorem.
\qed
\section{Proof of Corollary~\ref{cor:known}}
    Each agent $a$ executes Algorithm $\cA_{\kk}$ (see the proof of Theorem~\ref{thm:know-k}) with the parameter
    $k$ equal to $k_a/\rho$. Since $k/\rho^2\le k_a/\rho\le k$, the only
    difference between this case and the case where the agents know $k$,
    is that for each agent, the time required to perform each spiral search
    is multiplied by a constant factor of at most $\rho^2$.
    Therefore, the analysis in the proof of Theorem~\ref{thm:know-k}
    remains the same and the running time is increased by a multiplicative factor of at most
    $\rho^2$.
\qed

\end{document}